\theoremstyle{plain}
\newtheorem{theorem}{Theorem}[section]
\newtheorem{proposition}[theorem]{Proposition}
\newtheorem{lemma}[theorem]{Lemma}
\theoremstyle{definition}
\newtheorem{definition}{Definition}[section]
\theoremstyle{remark}
\newtheorem{remark}{\textbf{Remark}}[section]
\theoremstyle{example}
\numberwithin{equation}{section}
\title{Stationary Measures of Space-Inhomogeneous Three-State Quantum Walks on Line: Revisited}
\author{Shengsheng Liu,\ \  Caishi Wang,\ \ Jijun Zhao\\
    School of Mathematics and Statistics, Northwest Normal University\\
    Lanzhou, Gansu 730070, People's Republic of China}
\begin{document}
\maketitle

\noindent\textbf{Abstract.}\ \
Of a quantum walk, its stationary measures play an important role in understanding its evolution behavior.
In this paper we investigate stationary measures of two models of space-inhomogeneous three-state quantum walk on the line.
By using the method of reduced matrix, we find out stationary measures of the two models under some mild conditions. Our results
generalize the corresponding ones existing in the literature.
\vskip 2mm

\noindent\textbf{Keywords.}\ \  Quantum walk; Stationary measure; Method of reduced matrix.
\vskip 2mm

\noindent\textbf{Mathematics Subject Classification.}\ \ 81S25; 81S22.

\section{Introduction}\label{sec-1}

Quantum walks (also known as quantum random walks) are quantum analogs of classical random walks in probability,
and have found successful applications in quantum computing and simulation of physical processes (see, e.g., \cite{venegas,bose, cw,zhan} and references therein).
Due to the quantum interference effects, quantum walks greatly outperform classical random walks at certain computational tasks,
and moreover it has turned out that quantum walks constitute universal models of quantum computation.
There are two basic categories of quantum walks: discrete-time ones and continuous-times ones. In this paper we only focus
on the discrete-time ones, which we simply call quantum walks below.

Three-state quantum walks on the line play an important role in understanding the evolution behavior of a general quantum walk.
In the past two decades, many researches have been done on limit probability distributions and localization of these walks.
Recent years have seen much interest in stationary measures of these walks, which can help understand evolution behavior of these walks.
Konno \cite{konno} analyzed the Grover walk, which is a space-homogeneous three-state walk on the line, and found that its stationary measure
decays exponentially with respect to position.
In 2015, Wang et al. \cite{wlw} considered a space-inhomogeneous three-state walk on the line, called the three-state Wojcik walk,
whose time evolution is determined by unitary matrices $U_x = e^{i v_{x}}G$, $x\in \mathbb{Z}$, where $v_x$ is defined by
\begin{equation*}
  v_{x}=
\left\{
  \begin{array}{ll}
    2 \pi \theta, & \hbox{$x=0$;} \\
    0, & \hbox{$x\ne 0$, $x\in \mathbb{Z}$.}
  \end{array}
\right.
\end{equation*}
with $\theta\in(0,1)$ and $G$ is the Grover matrix, namely
\begin{equation}\label{eq-1-1}
G=\frac{1}{3}\left(\begin{array}{ccc}
-1 & 2 & 2 \\
2 & -1 & 2 \\
2 & 2 & -1
\end{array}\right)
\end{equation}
By using the splitted generating function (SGF) method, they obtained the stationary measures of the three-state Wojcik walk \cite{wlw}.
In 2016, Endo et al. \cite{ekk-1} investigated the same walk and found a relation between the stationary and the limit measures of the walk.
In 2017, Kawai et al. \cite{kkk} introduced a novel method to deal with space-homogeneous three-state quantum walks, and two years later by using the method
Han et al. \cite{hgyc} calculated stationary measures of the Grover walk with one defect, which belongs to the category of space-inhomogeneous three-state quantum walks.
Nowadays the method introduced in \cite{kkk} is known as the method of reduced matrix in the literature.

In this paper, we would like to show that the method of reduced matrix can be also applied to more general space-inhomogeneous three-state quantum walks.
More precisely, we would like to use the method of reduced matrix to deal with two models of space-inhomogeneous three-state quantum walk on the line
(see Section~\ref{sec-3} for their detailed descriptions), which include the quantum walks considered in \cite{wlw, ekkt} as a special case.

The paper is organized as follows. In Section~\ref{sec-2}, we give the definition of a general space-inhomogeneous three-state quantum walk
on the line and state some known results about the eigenvalue of the evolution operator of a space-homogeneous three-state quantum walk on the line.
Our main work lies in Section~\ref{sec-3}, where we first describe our two models of space-inhomogeneous three-state quantum walk and then
calculate their stationary measures. Finally in Section~\ref{sec-4}, we make some conclusion remarks.

Throughout this paper $\mathbb{Z}$ denotes the integer lattice, while $\mathbb{C}$ denotes the complex numbers. By convention $\mathbb{C}^3$ means the $3$-dimensional complex Euclidean space.

\section{Preliminaries}\label{sec-2}

In this section, we recall some necessary notions, notation and facts about three-state quantum walks on the line.

Let $l^2(\mathbb{Z}, \mathbb{C}^3)$ be the space of all square summable $\mathbb{C}^3$-valued functions defined on $\mathbb{Z}$, namely
\begin{equation*}
  l^2(\mathbb{Z}, \mathbb{C}^3) = \left\{\Psi \colon \mathbb{Z}\rightarrow \mathbb{C}^3 \,\biggm|\, \sum_{x\in \mathbb{Z}} \|\Psi(x)\|^2 <\infty \right\},
\end{equation*}
where $\|\cdot\|$ stands for the norm in $\mathbb{C}^3$. Note that  $l^2(\mathbb{Z}, \mathbb{C}^3)$ is a complex Hilbert of infinite dimension
with a countably-infinite orthonormal basis $\{\phi_z \mid z\in \mathbb{Z}\}$, where $\phi_z$ is the function on $\mathbb{Z}$ given by
$\phi_z(z)=1$ and $\phi_z(x)=0$ for all $x\in \mathbb{Z}$ with $x\ne z$.

Let $\mathfrak{C} = \{C_{x} \mid x\in \mathbb{Z}\}$ be a family of $3\times 3$ unitary matrices indexed by $\mathbb{Z}$, where $C_{x}$ has entries of the following form
\begin{equation}\label{eq-2-1}
  C_{x}
  =\left(\begin{array}{lll}
   a_{x} & b_{x} & c_{x} \\
   d_{x} & e_{x} & f_{x} \\
   g_{x} & h_{x} & k_{x}
\end{array}\right).
\end{equation}
Then each $C_{x}$ has a decomposition of the form  $C_{x} = C_x^{(u)} + C_x^{(m)} + C_x^{(l)}$, where
\begin{equation}\label{eq-2-2}
\begin{aligned}
C_{x}^{(u)} &=\left(\begin{array}{ccc}
a_{x} & b_{x} & c_{x} \\
0 & 0 & 0 \\
0 & 0 & 0
\end{array}\right),\ \
C_{x}^{(m)}=\left(\begin{array}{ccc}
0 & 0 & 0 \\
d_{x} & e_{x} & f_{x} \\
0 & 0 & 0
\end{array}\right),\ \
C_{x}^{(l)}=\left(\begin{array}{ccc}
0 & 0 & 0 \\
0 & 0 & 0 \\
g_{x} & h_{x} & k_{x}
\end{array}\right).
\end{aligned}
\end{equation}
It can be shown that there exists a unique unitary operator $\mathcal{U}_{\mathfrak{C}}$ on $l^2(\mathbb{Z}, \mathbb{C}^3)$ such that
\begin{equation}
  [\mathcal{U}_{\mathfrak{C}}\Psi](x) = C_{x-1}^{(u)}\Psi(x-1) + C_{x}^{(m)}\Psi(x) + C_{x+1}^{(l)}\Psi(x+1),\quad x\in \mathbb{Z},
\end{equation}
where $\Psi$ ranges over $l^2(\mathbb{Z}, \mathbb{C}^3)$.

\begin{definition}\label{def-2-1}
The unitary operator $\mathcal{U}_{\mathfrak{C}}$ indicated above is called the unitary operator determined
by the family $\mathfrak{C} =\{C_{x} \mid x\in \mathbb{Z}\}$ of unitary matrices.
\end{definition}

Let $A$ be a fixed $3\times 3$ unitary matrix. Then, by letting $C_{x} = A$ for all $x\in \mathbb{Z}$, one gets a family $\mathfrak{C} =\{C_{x} \mid x\in \mathbb{Z}\}$
of $3\times 3$ unitary matrices indexed by $\mathbb{Z}$ and the unitary operator $\mathcal{U}_{\mathfrak{C}}$ determined by the family
$\mathfrak{C} =\{C_{x} \mid x\in \mathbb{Z}\}$.
In that case, we simply call $\mathcal{U}_{\mathfrak{C}}$ the unitary operator determined by the unitary matrix $A$ and write it as $\mathcal{U}_{A}$ instead.

\begin{definition}\label{def-2-2}
Let $\mathfrak{C} = \{C_{x} \mid x\in \mathbb{Z}\}$ be a family of $3\times 3$ unitary matrices indexed by $\mathbb{Z}$.
The space-inhomogeneous three-state quantum walk on the line (for brevity, the walk below) determined by $\mathfrak{C} = \{C_{x} \mid x\in \mathbb{Z}\}$
is the one that admits the following features:
\begin{enumerate}
  \item[(1)] the state space of the walk is $l^2(\mathbb{Z},\mathbb{C}^3)$ and its states are represented by unit vectors in $l^2(\mathbb{Z},\mathbb{C}^3)$;
  \item[(2)] the evolution of the walk is governed by equation
             \begin{equation}\label{eq-evolution}
               \Phi_{n} = \mathcal{U}_{\mathfrak{C}}^n \Phi_{0},\quad n\geq 0,
             \end{equation}
             where $\Phi_{n}$ denotes the state of walk at time $n\geq 0$, in particular $\Phi_{0}$ is the initial state, and $\mathcal{U}_{\mathfrak{C}}$ is the
             unitary operator determined by $\mathfrak{C} = \{C_{x} \mid x\in \mathbb{Z}\}$;
  \item[(3)] the quantity $\|\Phi_{n}(x)\|^2$ is the probability of finding the walker at position $x\in \mathbb{Z}$ at time $n\geq 0$.
\end{enumerate}
\end{definition}

Usually, the unitary operator $\mathcal{U}_{\mathfrak{C}}$ indicated in (\ref{eq-evolution}) is known as the evolution operator of the walk,
while the family $\mathfrak{C} = \{C_{x} \mid x\in \mathbb{Z}\}$ is referred to as the coin matrices (of the walk), which describe
the walk's internal degrees of freedom. In the language of physics, the value $\Phi_{n}(x)$ of the state $\Phi_{n}$ at position $x$ is called the
probability amplitude. When the evolution operator is $\mathcal{U}_{A}$, the unitary operator determined by a single unitary matrix $A$,
we say the walk is space-homogeneous.

\begin{remark}\label{rem-2-1}
Let $\mathscr{F}(\mathbb{Z},\mathbb{C}^3)$ be the set of all functions $\Psi\colon \mathbb{Z} \rightarrow\mathbb{C}^3$, which forms a complex linear
space with the usual addition and scalar multiplication and includes $l^2(\mathbb{Z},\mathbb{C}^3)$ as a linear subspace.
Let $\mathcal{U}_{\mathfrak{C}}$ be the unitary operator determined by a family $\mathfrak{C} =\{C_{x} \mid x\in \mathbb{Z}\}$ of unitary matrices.
Then, $\mathcal{U}_{\mathfrak{C}}$ has an extension $\widetilde{\mathcal{U}_{\mathfrak{C}}}$ to $\mathscr{F}(\mathbb{Z},\mathbb{C}^3)$ satisfying
\begin{equation}
  \widetilde{[\mathcal{U}_{\mathfrak{C}}}\Psi](x) = C_{x-1}^{(u)}\Psi(x-1) + C_{x}^{(m)}\Psi(x) + C_{x+1}^{(l)}\Psi(x+1),\quad x\in \mathbb{Z},
\end{equation}
where $\Psi$ ranges over $\mathscr{F}(\mathbb{Z},\mathbb{C}^3)$. In what follows, we call $\widetilde{\mathcal{U}_{\mathfrak{C}}}$ the natural extension
of $\mathcal{U}_{\mathfrak{C}}$ and, instead of $\widetilde{\mathcal{U}_{\mathfrak{C}}}$, we simply use $\mathcal{U}_{\mathfrak{C}}$ to mean the natural extension.
\end{remark}

For $\Psi\in \mathscr{F}(\mathbb{Z},\mathbb{C}^3)$, we denote by $\nu(\Psi)$ the measure on $\mathbb{Z}$ given by
$[\nu(\Psi)](x) = \|\Psi(x)\|^2$, $x\in \mathbb{Z}$.

\begin{definition}
Consider the walk determined by a family $\mathfrak{C} = \{C_{x} \mid x\in \mathbb{Z}\}$ of $3\times 3$-unitary matrices. We define
\begin{equation}\label{eq}
  \mathcal{M}(\mathcal{U}_{\mathfrak{C}})
  = \big\{ \nu(\Psi) \mid \Psi\in \mathscr{F}(\mathbb{Z},\mathbb{C}^3),\, \nu(\mathcal{U}_{\mathfrak{C}}^n\Psi) = \nu(\Psi), \forall\, n\geq 0\big\},
\end{equation}
where the first $\mathcal{U}_{\mathfrak{C}}$ is the evolution operator of the walk and the second $\mathcal{U}_{\mathfrak{C}}$ is the natural extension of the first one.
Elements of $\mathcal{M}(\mathcal{U}_{\mathfrak{C}})$ are called stationary measures of the walk if $\mathcal{M}(\mathcal{U}_{\mathfrak{C}})\neq \emptyset$.
\end{definition}

Consider the walk determined by a family $\mathfrak{C} = \{C_{x} \mid x\in \mathbb{Z}\}$ of $3\times 3$-unitary matrices
and its evolution operator $\mathcal{U}_{\mathfrak{C}}$.
Let $\Psi \in l^2(\mathbb{Z},\mathbb{C}^3)$ with $\Psi\ne 0$ be an eigenvector of $\mathcal{U}_{\mathfrak{C}}$.
Then there exists some $\lambda\in \mathbb{C}$ with $|\lambda|=1$ such that $\mathcal{U}_{\mathfrak{C}}\Psi = \lambda \Psi$. By direct calculation, one gets
\begin{equation*}
  \nu(\mathcal{U}_{\mathfrak{C}}^n\Psi) = \nu(\lambda^n\Psi) = |\lambda^n|^2\nu(\Psi) = |\lambda|^{2n}\nu(\Psi) =\nu(\Psi),\quad \forall\, n\geq 0,
\end{equation*}
hence $\nu(\Psi)$ is a stationary measure of the walk. This observation actually suggests a way to find out a stationary measure of the walk.

Let $\mathcal{U}_A$ be the unitary operator on $l^2(\mathbb{Z},\mathbb{C}^3)$ determined by a $3\times 3$-unitary matric $A$,
where $A$ has entries of the following form
\begin{equation*}
   A
  =\left(\begin{array}{lll}
   a_{11} & a_{12} & a_{13} \\
   a_{21} & a_{22} & a_{23} \\
   a_{31} & a_{32} & a_{33}
\end{array}\right)
\end{equation*}
with $a_{ij}\ne 0$ for all $1\leq i,\, j \leq 3$ and $|a_{22}|\ne 1$. As noted above (see Remark~\ref{rem-2-1}), we still use $\mathcal{U}_A$ to mean the natural extension
of the unitary operator $\mathcal{U}_A$ to $\mathscr{F}(\mathbb{Z},\mathbb{C}^3)$.
Write
$$
B=\left|\begin{array}{ll}
a_{11} & a_{12} \\
a_{21} & a_{22}
\end{array}\right|,
\quad C=\left|\begin{array}{ll}
a_{12} & a_{13} \\
a_{22} & a_{23}
\end{array}\right|, \quad D=\left|\begin{array}{ll}
a_{21} & a_{22} \\
a_{31} & a_{32}
\end{array}\right|,
\quad E=\left|\begin{array}{ll}
a_{22} & a_{23} \\
a_{32} & a_{33}
\end{array}\right|.
$$
Consider the eigenvalue problem $\mathcal{U}_A\Psi = \lambda\Psi$ in $\mathscr{F}(\mathbb{Z},\mathbb{C}^3)$.
The next two lemmas present formulas to construct an eigenvector of $\mathcal{U}_A$.

\begin{lemma}\cite{kkk}\label{lem-2-1}
Assume that $\lambda=-\frac {C} {a_ {13}}=-\frac {D} {a_ {31}} $ with $|\lambda|=1$. Take $\varphi_{1}$, $\varphi_{3}\in \mathbb{C}$
and define $\Psi\colon \mathbb{Z}\rightarrow \mathbb{C}^3$ as
\begin{equation}\label{}
\Psi(x)=\left(\begin{array}{c}
\left(\tilde{a}_{1}^{-1} \lambda\right)^{x} \varphi_{1} \\
-\frac{a_{13}}{a_{12} a_{23}}\left(a_{21}\left(\tilde{a}_{1}^{-1} \lambda\right)^{x} \varphi_{1}+a_{23}\left(\tilde{a}_{2} \lambda^{-1}\right)^{x} \varphi_{3}\right) \\
\left(\tilde{a}_{2} \lambda^{-1}\right)^{x} \varphi_{3}
\end{array}\right),\quad x\in \mathbb{Z},
\end{equation}
where
$$
\tilde{a}_{1}=a_{11}-\frac{a_{13} a_{21}}{a_{23}}, \quad \tilde{a}_{2}=a_{33}-\frac{a_{23} a_{31}}{a_{21}}.
$$
Then it holds that $\mathcal{U}_A\Psi = \lambda\Psi$.
\end{lemma}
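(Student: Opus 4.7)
The plan is to verify $\mathcal{U}_A \Psi = \lambda\Psi$ componentwise. By the decomposition $A = A^{(u)} + A^{(m)} + A^{(l)}$ recorded in (\ref{eq-2-2}), the vector identity at each $x\in\mathbb{Z}$ splits into three scalar equations: (I) a top-row equation linking $\Psi(x-1)$ and $\Psi_1(x)$, (II) a purely local middle-row equation in $\Psi(x)$, and (III) a bottom-row equation linking $\Psi(x+1)$ and $\Psi_3(x)$. I will substitute the proposed $\Psi$ into each of (I)--(III) and reduce it to algebraic identities, to be matched against the two hypothesis relations $\lambda a_{13} = -C$ and $\lambda a_{31} = -D$.

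The middle equation (II) carries no position shift. Plugging in $\Psi_2(x) = -\frac{a_{13}}{a_{12}a_{23}}\bigl(a_{21}\Psi_1(x) + a_{23}\Psi_3(x)\bigr)$ and rearranging reduces (II) to the single scalar identity $(\lambda - a_{22})a_{13} = -a_{12}a_{23}$, which is exactly $\lambda a_{13} = -C$. For (I), substituting $\Psi_2(x-1)$ makes the $\Psi_3(x-1)$ contributions cancel outright---the $+a_{13}$ coming from $a_{13}\Psi_3(x-1)$ against the $-a_{13}$ produced by $a_{12}\Psi_2(x-1)$---so that (I) reduces to a one-step recursion of the form $\tilde{a}_1 \Psi_1(x-1) = \lambda \Psi_1(x)$, where $\tilde{a}_1 = a_{11} - a_{13}a_{21}/a_{23}$. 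The geometric factor chosen for $\Psi_1$ in the ansatz is tailored precisely to this recursion.

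The bottom equation (III) is the delicate case: after substituting $\Psi_2(x+1)$, both a $\varphi_1$- and a $\varphi_3$-type term survive at position $x+1$. The $\varphi_1$ coefficient vanishes iff the cross-identity $a_{12}a_{23}a_{31} = a_{13}a_{21}a_{32}$ holds, and this is precisely the joint content of the two hypotheses: writing $\lambda a_{13} = a_{13}a_{22} - a_{12}a_{23}$ and $\lambda a_{31} = a_{22}a_{31} - a_{21}a_{32}$ and equating yields $a_{12}a_{23}/a_{13} = a_{22} - \lambda = a_{21}a_{32}/a_{31}$. The same cross-identity rewrites $a_{32}a_{13}/a_{12}$ as $a_{23}a_{31}/a_{21}$, so the surviving $\varphi_3$ coefficient collapses to $\tilde{a}_2 = a_{33} - a_{23}a_{31}/a_{21}$, giving the recursion $\tilde{a}_2 \Psi_3(x+1) = \lambda \Psi_3(x)$, which is again matched by the ansatz. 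The main obstacle is bookkeeping rather than genuine difficulty: tracking which terms cancel after substituting $\Psi_2$, and recognizing the cross-identity as the joint consequence of the two halves of the hypothesis; once these are in hand, (I)--(III) hold for every $x\in\mathbb{Z}$ and the claim follows.
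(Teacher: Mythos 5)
Your overall strategy---componentwise verification, substitution of the second component, reduction of the middle row to $(\lambda-a_{22})a_{13}=-a_{12}a_{23}$, and extraction of the cross identity $a_{12}a_{23}a_{31}=a_{13}a_{21}a_{32}$ from the two hypotheses to kill the $\varphi_1$-type term in the remaining row---is exactly the right argument (the paper gives no proof of this lemma, it simply cites \cite{kkk}). But the final matching step fails as you have set it up. You pair the top row with $\Psi(x-1)$ and the bottom row with $\Psi(x+1)$, which is indeed the defining relation $[\mathcal{U}_{\mathfrak{C}}\Psi](x)=C_{x-1}^{(u)}\Psi(x-1)+C_{x}^{(m)}\Psi(x)+C_{x+1}^{(l)}\Psi(x+1)$ used in Section~\ref{sec-2}; under that pairing your reductions are correct, namely $\lambda\Psi_1(x)=\tilde a_1\Psi_1(x-1)$ and $\lambda\Psi_3(x)=\tilde a_2\Psi_3(x+1)$. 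The trouble is that the geometric sequences of the lemma solve the \emph{reciprocal} recursions: $\Psi_1(x)=(\tilde a_1^{-1}\lambda)^x\varphi_1$ obeys $\tilde a_1\Psi_1(x)=\lambda\Psi_1(x-1)$, and $\Psi_3(x)=(\tilde a_2\lambda^{-1})^x\varphi_3$ obeys $\lambda\Psi_3(x+1)=\tilde a_2\Psi_3(x)$. So your assertion that ``the geometric factor chosen for $\Psi_1$ is tailored precisely to this recursion'' is false unless $\lambda^2=\tilde a_1^2$ (respectively $\lambda^2=\tilde a_2^2$); for instance, for the matrix $A^{(\gamma)}$ of Model~II one has $\tilde a_1=\tilde a_2=-1$ while $\lambda=e^{i\xi}\ne\pm1$ for generic $\gamma$, and the verification does not close.

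The underlying issue is a convention mismatch you needed to detect and resolve: Lemmas~\ref{lem-2-1} and \ref{lem-2-2} are imported from \cite{kkk}, where the evolution pairs the top row of the coin with $\Psi(x+1)$ and the bottom row with $\Psi(x-1)$, i.e.\ the shifts are opposite to those in this paper's definition of $\mathcal{U}_{\mathfrak{C}}$. Under that pairing your computation goes through verbatim: the top row reduces to $\lambda\Psi_1(x)=\tilde a_1\Psi_1(x+1)$, whose solution is exactly $(\tilde a_1^{-1}\lambda)^x\varphi_1$, and the bottom row reduces (via the same cancellation and cross identity, noting $a_{33}-a_{13}a_{32}/a_{12}=a_{33}-a_{23}a_{31}/a_{21}=\tilde a_2$) to $\lambda\Psi_3(x)=\tilde a_2\Psi_3(x-1)$, solved by $(\tilde a_2\lambda^{-1})^x\varphi_3$. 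So either swap the shifts in your equations (I) and (III), thereby proving the lemma for the operator of \cite{kkk}, or keep the paper's operator and invert the two geometric ratios in the ansatz; as written, the proof asserts an equality at the decisive step that is generally false.
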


\begin{lemma}\cite{kkk}\label{lem-2-2}
Assume that $\lambda=\frac{B}{a_{11}}=\frac{E}{a_{33}}$ with $|\lambda|=1$ and $\lambda^{2}=\tilde{a}_{1} \tilde{a}_{2}$.
Take a function $\varphi\colon \mathbb{Z}\rightarrow \mathbb{C}$ such that $\varphi \ne 0$ (here $0$ means the null function)
and define $\Psi\colon \mathbb{Z}\rightarrow \mathbb{C}^3$ as
\begin{equation}\label{}
\Psi(x)=\left(\begin{array}{c}
\varphi_{x} \\
-\frac{a_{11}}{a_{12} a_{21}}\left\{a_{21} \varphi_{x}+a_{23}\left(\tilde{a}_{1}^{-1} \lambda\right) \varphi_{x-1}\right\} \\
\left(\tilde{a}_{1}^{-1} \lambda\right) \varphi_{x-1}
\end{array}\right),\quad x \in \mathbb{Z}.
\end{equation}
Here$$\tilde{a}_{1}=a_{13}-\frac{a_{11} a_{23}}{a_{21}}, \quad \tilde{a}_{2}=a_{31}-\frac{a_{21} a_{33}}{a_{23}}.$$
Then it holds that $\mathcal{U}_A\Psi = \lambda\Psi$.
\end{lemma}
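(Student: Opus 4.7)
The plan is to substitute the stated formula for $\Psi$ into the eigenvalue equation $\mathcal{U}_A \Psi = \lambda \Psi$ and verify it componentwise. Since $[\mathcal{U}_A\Psi](x) = A^{(u)}\Psi(x-1) + A^{(m)}\Psi(x) + A^{(l)}\Psi(x+1)$, the eigenvalue equation at position $x$ produces three scalar identities. Before attacking them, I would record one consequence of combining the two eigenvalue hypotheses: $\lambda = B/a_{11}$ rewrites as $a_{22}-\lambda = a_{12}a_{21}/a_{11}$, while $\lambda = E/a_{33}$ rewrites as $a_{22}-\lambda = a_{23}a_{32}/a_{33}$. Comparing them gives the companion identity $a_{11}a_{23}a_{32} = a_{12}a_{21}a_{33}$, which drives two of the cancellations below.

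I would first dispatch the middle row: rearranging gives $(\lambda-a_{22})\Psi_{2}(x) = a_{21}\varphi_{x} + a_{23}(\tilde{a}_{1}^{-1}\lambda)\varphi_{x-1}$, and substituting $a_{22}-\lambda = a_{12}a_{21}/a_{11}$ reproduces the stated formula for $\Psi_{2}(x)$ exactly, so this row is automatic. Next I would tackle the top row: substituting $\Psi_{2}(x-1)$ from its formula causes the $\varphi_{x-1}$ contributions to cancel, leaving a single term with coefficient $a_{13}-a_{11}a_{23}/a_{21}$, which is precisely $\tilde{a}_{1}$. The top-row equation therefore reduces to $\lambda\varphi_{x-2} = \lambda\varphi_{x}$, i.e.\ $\varphi$ must satisfy the two-step recurrence $\varphi_{x} = \varphi_{x-2}$; this $2$-periodicity is the tacit constraint on $\varphi$ that the lemma in fact requires.

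The bottom row is where the bookkeeping concentrates and is the step I expect to be the main obstacle. After substituting $\Psi_{2}(x+1)$, the coefficient of $(\tilde{a}_{1}^{-1}\lambda)\varphi_{x}$ becomes $a_{33} - a_{11}a_{23}a_{32}/(a_{12}a_{21})$, which vanishes by the companion identity; the coefficient of $\varphi_{x+1}$ becomes $a_{31} - a_{11}a_{32}/a_{12}$, and the same identity rewrites it as $a_{31} - a_{21}a_{33}/a_{23} = \tilde{a}_{2}$. Matching against the right-hand side $\lambda\Psi_{3}(x) = \tilde{a}_{1}^{-1}\lambda^{2}\varphi_{x-1}$ yields $\tilde{a}_{1}\tilde{a}_{2}\,\varphi_{x+1} = \lambda^{2}\varphi_{x-1}$, which collapses to $\varphi_{x+1} = \varphi_{x-1}$ under the final hypothesis $\lambda^{2} = \tilde{a}_{1}\tilde{a}_{2}$; this is already implied by the $2$-periodicity extracted from the top row, so no new restriction on $\varphi$ appears. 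The principal care throughout is noticing that each cancellation in the bottom row requires the conjunction of both eigenvalue equalities $\lambda = B/a_{11}$ and $\lambda = E/a_{33}$, not either alone, together with the last hypothesis $\lambda^{2} = \tilde{a}_{1}\tilde{a}_{2}$ to close the argument.
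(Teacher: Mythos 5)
Your componentwise verification is set up correctly and the algebra in all three rows is right: the middle row is automatic once $\lambda=B/a_{11}$ is rewritten as $a_{22}-\lambda=a_{12}a_{21}/a_{11}$, the top row reduces to $\lambda\varphi_{x-2}=\lambda\varphi_{x}$, and the bottom row, via $a_{11}a_{23}a_{32}=a_{12}a_{21}a_{33}$ and $\lambda^{2}=\tilde{a}_{1}\tilde{a}_{2}$, reduces to $\varphi_{x+1}=\varphi_{x-1}$. The problem is the conclusion you draw from this, namely that the lemma ``tacitly requires'' $\varphi$ to be $2$-periodic. The lemma, quoted from \cite{kkk}, asserts the eigenrelation for an \emph{arbitrary} nonzero $\varphi$, and it is invoked in exactly that generality later on (Proposition~\ref{prop-3-1} takes any $\varphi\ne 0$, and the stationary measures in Section~\ref{sec-3} depend on that freedom). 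What you have proved is a strictly weaker statement, so as a proof of the lemma as stated and as used, the argument does not close; converting the obstruction into an unstated hypothesis on $\varphi$ is not a legitimate repair.

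What your calculation has actually detected is a shift-convention mismatch, not a hidden hypothesis. In \cite{kkk} the evolution acts as $[\mathcal{U}_{A}\Psi](x)=A^{(u)}\Psi(x+1)+A^{(m)}\Psi(x)+A^{(l)}\Psi(x-1)$, i.e.\ the upper block pulls from $x+1$ and the lower block from $x-1$, which is the mirror image of the convention written in Section~\ref{sec-2} of this paper. Redo your three rows with that orientation: the middle row is unchanged; the top row becomes $a_{11}\Psi_{1}(x+1)+a_{12}\Psi_{2}(x+1)+a_{13}\Psi_{3}(x+1)=\lambda\Psi_{1}(x)$, whose left side collapses to $\bigl(a_{13}-\tfrac{a_{11}a_{23}}{a_{21}}\bigr)\tilde{a}_{1}^{-1}\lambda\varphi_{x}=\lambda\varphi_{x}$, an identity with no condition on $\varphi$; and the bottom row becomes $\tilde{a}_{2}\varphi_{x-1}=\tilde{a}_{1}^{-1}\lambda^{2}\varphi_{x-1}$, which holds for every $\varphi$ precisely because $\lambda^{2}=\tilde{a}_{1}\tilde{a}_{2}$. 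So under the orientation for which the eigenvector formula was designed (and under which the rest of the paper uses it), your cancellations close with no periodicity requirement, while under the convention literally printed here the stated $\Psi$ is simply not the right vector (the shifts, equivalently the roles of $\varphi_{x-1}$ and $\varphi_{x+1}$ in the first and third components, must be mirrored). The correct move is to flag this discrepancy and prove the lemma under the convention of \cite{kkk}, or to restate $\Psi$ with the shifts reflected --- not to restrict $\varphi$.
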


\section{Main results}\label{sec-3}

In the present section, we state and prove our main results about stationary measures of two models of space-inhomogeneous three-state quantum walk on the line.

\subsection{Model I}

We first describe the model we will deal with in this subsection. Consider the unitary matrices $U_x^{(\phi)} = e^{iv_x}G^{(\phi)}$, $x\in \mathbb{Z}$,
where $v_x$ is defined by
\begin{equation*}
  v_{x}=
\left\{
  \begin{array}{ll}
    2 \pi \theta, & \hbox{$x=0$;} \\
    0, & \hbox{$x\ne 0$, $x\in \mathbb{Z}$}
  \end{array}
\right.
\end{equation*}
with $\theta\in(0,1)$ and $G^{(\phi)}$ is the generalized Grover matrix, namely
\begin{equation}\label{eq-3-1}
G^{(\phi)}
= \frac{1}{3}\left(\begin{array}{ccc}
-\cos \phi & 2 \cos \phi &2 \cos \phi- i 3\sin \phi \\
2 \cos \phi & -\cos \phi- i3 \sin \phi & 2 \cos \phi \\
2 \cos \phi -i 3\sin \phi & 2 \cos \phi & -\cos \phi
\end{array}\right),
\end{equation}
where $\phi$ is a real parameter. Clearly, $G^{(0)}=G$, namely the Grover matrix $G$ is the special case of $G^{(\phi)}$ when $\phi=0$.

According to Definitions~\ref{def-2-1} and \ref{def-2-2}, the family $\mathfrak{U}^{(\phi)}:=\big\{U_x^{(\phi)} = e^{iv_x}G^{(\phi)}\mid x\in \mathbb{Z}\big\}$
of unitary matrices determines a space-inhomogeneous three-state quantum walk on $\mathbb{Z}$ and its evolution operator is the unitary operator
$\mathcal{U}_{\mathfrak{U}^{(\phi)}}$ on $l^2(\mathbb{Z},\mathbb{C}^3)$ determined by the family $\mathfrak{U}^{(\phi)}$.
In what follows, we simply call this model the walk $\mathcal{U}_{\mathfrak{U}^{(\phi)}}$.

\begin{remark}\label{rem-3-1}
The walk $\mathcal{U}_{\mathfrak{U}^{(\phi)}}$ includes those considered in \cite{wlw, ekk-2} as a special case.
\end{remark}

Note that the generalized Grover matrix $G^{(\phi)}$ is also a unitary matrix, hence it determines a space-homogeneous three-state quantum walk on $\mathbb{Z}$,
whose evolution operator is the unitary operator $\mathcal{U}_{G^{(\phi)}}$ determined by $G^{(\phi)}$.
Similarly, we call the space-homogeneous three-state quantum walk determined by $G^{(\phi)}$ the walk $\mathcal{U}_{G^{(\phi)}}$ below.

As can be seen, there are close links between the walk $\mathcal{U}_{\mathfrak{U}^{(\phi)}}$ and the walk $\mathcal{U}_{G^{(\phi)}}$.
The next proposition offers stationary measures of the walk $\mathcal{U}_{G^{(\phi)}}$, which can help find out stationary measures of the walk $\mathcal{U}_{\mathfrak{U}^{(\phi)}}$.

\begin{proposition}\label{prop-3-1}
Consider the walk $\mathcal{U}_{G^{(\phi)}}$.
Let the parameter $\phi$ be such that $\phi\in[0, 2\pi)$ and $\cos \phi \ne 0$. Then, for any function $\varphi\colon \mathbb{Z} \rightarrow \mathbb{C}$
with $\varphi\ne 0$, the walk $\mathcal{U}_{G^{(\phi)}}$ has a corresponding stationary measure of the following form
\begin{equation}\label{eq-3-2}
  \mu(x)=\frac{5}{4}\left(|\varphi(x)|^{2}+ |\varphi(x-1)|^{2}\right)+\frac{1}{2} \Re\left(\varphi(x) \overline{\varphi(x-1)}\right),\quad x\in \mathbb{Z},
\end{equation}
where $\Re(u)$ means the real part of a complex number $u$.
\end{proposition}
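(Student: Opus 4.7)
The plan is to realize the measure $\mu$ in (\ref{eq-3-2}) as $\nu(\Psi)$ for a suitable eigenvector $\Psi$ of $\mathcal{U}_{G^{(\phi)}}$; as noted in the paragraph preceding Lemma~\ref{lem-2-1}, any eigenvector of $\mathcal{U}_{G^{(\phi)}}$ with a unimodular eigenvalue produces a stationary measure via $\nu(\Psi)$. The eigenvector will be constructed by invoking Lemma~\ref{lem-2-2} with the matrix $A = G^{(\phi)}$. I first check that the standing hypotheses on $A$ are met: the assumption $\cos\phi\ne 0$ makes every entry of $G^{(\phi)}$ nonzero, and it gives $|a_{22}|^2 = (9-8\cos^{2}\phi)/9\ne 1$.

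Next, writing $a_{ij}$ for the entries of $G^{(\phi)}$, I compute the relevant $2\times 2$ minors. A direct calculation yields
\[
B = E = \frac{\cos\phi}{3}\bigl(-\cos\phi + i\sin\phi\bigr),
\]
so that $\lambda := B/a_{11} = E/a_{33} = \cos\phi - i\sin\phi = e^{-i\phi}$, which clearly has modulus $1$. With the auxiliary quantities of Lemma~\ref{lem-2-2}, namely $\tilde a_{1} = a_{13} - a_{11}a_{23}/a_{21}$ and $\tilde a_{2} = a_{31} - a_{21}a_{33}/a_{23}$, short computations give $\tilde a_{1} = \tilde a_{2} = e^{-i\phi}$, hence $\tilde a_{1}\tilde a_{2} = e^{-2i\phi} = \lambda^{2}$. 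All the hypotheses of Lemma~\ref{lem-2-2} are therefore in force.

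Applying Lemma~\ref{lem-2-2} with this $\lambda$ and the given function $\varphi$, I obtain an eigenvector $\Psi$ of $\mathcal{U}_{G^{(\phi)}}$. Two simplifications make $\Psi$ remarkably clean: first, $\tilde a_{1}^{-1}\lambda = 1$; second, $-a_{11}/(a_{12}a_{21}) = 3/(4\cos\phi)$ while $a_{21}\varphi(x)+a_{23}\varphi(x-1) = (2\cos\phi/3)\bigl(\varphi(x)+\varphi(x-1)\bigr)$, and these two factors cancel (this is where $\cos\phi\ne 0$ is essential) to produce
\[
\Psi(x) = \Bigl(\,\varphi(x),\ \tfrac{1}{2}\bigl(\varphi(x)+\varphi(x-1)\bigr),\ \varphi(x-1)\,\Bigr)^{\!T},\qquad x\in\mathbb{Z}.
\]
Then $\|\Psi(x)\|^{2} = |\varphi(x)|^{2} + \tfrac{1}{4}|\varphi(x)+\varphi(x-1)|^{2} + |\varphi(x-1)|^{2}$, and expanding $|\varphi(x)+\varphi(x-1)|^{2} = |\varphi(x)|^{2} + |\varphi(x-1)|^{2} + 2\Re\bigl(\varphi(x)\overline{\varphi(x-1)}\bigr)$ reproduces exactly (\ref{eq-3-2}).

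The main obstacle is not conceptual but algebraic bookkeeping: reliably verifying the identities $B/a_{11}=E/a_{33}$ and $\tilde a_{1}\tilde a_{2}=\lambda^{2}$ on the specific matrix $G^{(\phi)}$, and tracking the cancellations which produce the tidy coefficient $1/2$ in the middle component of $\Psi$. Once Lemma~\ref{lem-2-2} has been applied and these identities have been checked, the proposition follows at once.
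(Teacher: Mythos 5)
Your proposal is correct and follows essentially the same route as the paper: both verify the hypotheses of Lemma~\ref{lem-2-2} for $A=G^{(\phi)}$ with $\lambda=e^{-i\phi}$, obtain the eigenvector $\Psi(x)=\bigl(\varphi(x),\tfrac12(\varphi(x)+\varphi(x-1)),\varphi(x-1)\bigr)^{T}$, and compute $\|\Psi(x)\|^{2}$ to get (\ref{eq-3-2}). Your write-up simply makes explicit the intermediate identities ($B=E$, $\tilde a_{1}=\tilde a_{2}=e^{-i\phi}$, the cancellation giving the factor $\tfrac12$) that the paper summarizes as ``careful calculations,'' and these all check out.
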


\begin{proof}
Let $\varphi\colon \mathbb{Z} \rightarrow \mathbb{C}$ be any function with $\varphi\ne 0$.
Careful calculations give $\frac{B}{a_{11}}=\frac{E}{a_{33}}=\tilde{a}_{1}=\tilde{a}_{2}=e^{-i \phi}$.
Let $\lambda = e^{-i \phi}$. Then $\lambda = \frac{B}{a_{11}}=\frac{E}{a_{33}}$ with $|\lambda|=1$ and $\lambda^{2}=\tilde{a}_{1} \tilde{a}_{2}$.
Thus, by Lemma~\ref{lem-2-2}, $\lambda = e^{-i \phi}$ is an eigenvalue of $\mathcal{U}_{G^{(\phi)}}$ and the corresponding eigenvector $\Psi$ admits
a representation of the form
\begin{align*}
\begin{array}{c}
\Psi(x)=\left(\begin{array}{c}
\varphi_{x} \\
-\frac{a_{11}}{a_{12} a_{21}}\left\{a_{21} \varphi_{x}+a_{23}\left(\tilde{a}_{1}^{-1} \lambda\right) \varphi_{x-1}\right\} \\
\left(\tilde{a}_{1}^{-1} \lambda\right) \varphi_{x-1}
\end{array}\right)
=\left(\begin{array}{c}
\varphi_{x} \\
\frac{1}{2}\left(\varphi_{x}+\varphi_{x-1}\right) \\
\varphi_{x-1}
\end{array}\right),\quad x \in \mathbb{Z}.
\end{array}
\end{align*}
Thus the function $x \mapsto [\nu(\Psi)](x)$ is a stationary measure of the walk $\mathcal{U}_{G^{(\phi)}}$. On the other hand,
direct calculation yields
\begin{equation*}
[\nu(\Psi)](x)
=\frac{5}{4}\left(|\varphi_{x}|^{2}+|\varphi_{x-1}|^{2}\right)+\frac{1}{2} \Re\left(\varphi_{x} \overline{\varphi_{x-1}}\right),\quad
x\in \mathbb{Z}.
\end{equation*}
Thus the function $\mu$ defined by (\ref{eq-3-2}) is a stationary measure of the walk $\mathcal{U}_{G^{(\phi)}}$.
\end{proof}

\begin{theorem}\label{thr-3-2}
Consider the walk $\mathcal{U}_{\mathfrak{U}^{(\phi)}}$. Let the parameter $\phi$ be such that $\phi\in[0, 2\pi)$ and $\cos \phi \ne 0$.
Then, for each pair $\varphi_1$, $\varphi_3 \in \mathbb{C}$ with $|\varphi_1| + |\varphi_3| > 0$,
the walk $\mathcal{U}_{\mathfrak{U}^{(\phi)}}$ has a corresponding stationary measure of the form
\begin{equation}\label{eq-3-3}
  \mu(x)=\left(2+\frac{9}{4} \tan ^{2} \phi\right)\left(|\varphi_{1}|^{2}+|\varphi_{3}|^{2}\right)
       +\left(2+\frac{9}{2} \tan ^{2} \phi\right)\Re \left(\Delta(x) e^{2 i(\phi+\tau) x} \varphi_{1} \bar{\varphi}_{3}\right),\quad x\in \mathbb{Z},
\end{equation}
where $\tau\in [0,2\pi)$ is a real number such that $e^{i\tau}$ is an eigenvalue of the matrix $G^{(\phi)}$ and
\begin{equation*}
  \Delta(x)=
   \left\{
     \begin{array}{ll}
       e^{2\pi\theta i}, & \hbox{$x=-1$;} \\
       e^{-2\pi\theta i}, & \hbox{$x=1$;} \\
       1, & \hbox{$x\in \mathbb{Z}\setminus \{-1,1\}$.}
     \end{array}
   \right.
\end{equation*}
\end{theorem}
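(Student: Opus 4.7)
The plan is to construct a function $\Psi\in\mathscr{F}(\mathbb{Z},\mathbb{C}^3)$ and a unit complex $\lambda$ such that $\mathcal{U}_{\mathfrak{U}^{(\phi)}}\Psi=\lambda\Psi$; then, by the unitary-invariance argument already used in the proof of Proposition~\ref{prop-3-1}, $\nu(\Psi)$ is automatically a stationary measure of the walk, and it will remain only to identify it with the right-hand side of (\ref{eq-3-3}). Since $U_x^{(\phi)}=G^{(\phi)}$ for every $x\ne 0$, the eigenvalue equation $[\mathcal{U}_{\mathfrak{U}^{(\phi)}}\Psi](x)=\lambda\Psi(x)$ coincides with the space-homogeneous one at every $x\notin\{-1,0,1\}$, so Lemma~\ref{lem-2-1} will supply the bulk ansatz, and the defect phase $e^{2\pi i\theta}$ will enter only through the three coupled equations at $x=-1,0,1$.

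First I would apply Lemma~\ref{lem-2-1} to $A=G^{(\phi)}$: a short calculation from the entries in (\ref{eq-3-1}) yields $\tilde a_1=\tilde a_2=-e^{-i\phi}$ (both of unit modulus) and shows that $-C/a_{13}=-D/a_{31}$ has unit modulus, thereby fixing the admissible $\lambda$ and the two exponential bases $\alpha:=\tilde a_1^{-1}\lambda$, $\beta:=\tilde a_2\lambda^{-1}$ of the ansatz. The identity $\alpha\bar\beta=\tilde a_1^{-1}\tilde a_2^{-1}\lambda^{2}=e^{2i\phi}\lambda^{2}$ collapses to $e^{2i(\phi+\tau)}$ once one picks $\tau\in[0,2\pi)$ so that $\pm\lambda=e^{i\tau}$ is an eigenvalue of $G^{(\phi)}$, and this is where the factor $e^{2i(\phi+\tau)x}$ in (\ref{eq-3-3}) comes from. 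Squaring the three Lemma~\ref{lem-2-1} components using $|\alpha|=|\beta|=1$ together with the explicit moduli $|a_{21}|^{2}=|a_{23}|^{2}=(2\cos\phi/3)^{2}$ and $|a_{13}/(a_{12}a_{23})|^{2}=9(4\cos^{2}\phi+9\sin^{2}\phi)/(16\cos^{4}\phi)$ then reproduces the constant coefficients $2+\tfrac94\tan^{2}\phi$ and $2+\tfrac92\tan^{2}\phi$ appearing in (\ref{eq-3-3}).

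The crux is modifying the ansatz near the defect so that it actually solves $\mathcal{U}_{\mathfrak{U}^{(\phi)}}\Psi=\lambda\Psi$. I would take $\Psi$ to coincide with the Lemma~\ref{lem-2-1} eigenvector carrying parameters $(\varphi_1^{+},\varphi_3^{+})$ for $x\ge 1$ and $(\varphi_1^{-},\varphi_3^{-})$ for $x\le -1$ — both of the same moduli as $(\varphi_1,\varphi_3)$ but differing from them by phase offsets — while leaving $\Psi(0)$ as a free $\mathbb{C}^{3}$-vector. The three vector equations at $x=-1,0,1$ differ from their homogeneous counterparts only in that an extra factor $e^{2\pi i\theta}$ multiplies every occurrence of $\Psi(0)$; solving this $3\times 3$ vector system by tuning the phase offsets of $\varphi_1^{\pm},\varphi_3^{\pm}$ and reading $\Psi(0)$ off the middle equation should yield a two-parameter family of solutions indexed by $(\varphi_1,\varphi_3)$. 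Computing $\nu(\Psi)(x)$ site by site then recovers (\ref{eq-3-3}): for $|x|\ge 2$ the Lemma~\ref{lem-2-1} norm formula gives (\ref{eq-3-3}) with $\Delta(x)=1$; at $x=\pm 1$ the phase-shifted cross term produces precisely the extra factor $e^{\mp 2\pi i\theta}=\Delta(\pm 1)$; and at $x=0$ the contributions from the two half-lines combine to leave $\Delta(0)=1$. The main obstacle will be this final gluing step: verifying that the defect system genuinely admits the asserted two-parameter family of solutions and that the resulting $\Psi(\pm 1)$ and $\Psi(0)$ produce the $\Delta(x)$ factors in (\ref{eq-3-3}) exactly — neither more nor less — requires careful bookkeeping with the specific entries of $G^{(\phi)}$ and the correct placement of the $e^{2\pi i\theta}$ phase relative to the $\alpha^{x}, \beta^{x}$ exponentials.
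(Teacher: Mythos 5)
Your overall strategy is the same as the paper's: build an eigenvector of $\mathcal{U}_{\mathfrak{U}^{(\phi)}}$ from the Lemma~\ref{lem-2-1} exponential ansatz in the homogeneous region, glue across the defect at $x=0$, and read off $\nu(\Psi)$; your values $\tilde a_1=\tilde a_2=-e^{-i\phi}$ and the two constants $2+\frac94\tan^2\phi$, $2+\frac92\tan^2\phi$ are correct. But there are two genuine gaps. First, the eigenvalue identification. Lemma~\ref{lem-2-1} does not let you choose $\lambda$: the middle component of the ansatz forces $\lambda=-C/a_{13}$, which for $G^{(\phi)}$ equals $-\bigl(2\cos^2\phi+3\sin^2\phi+i\sin\phi\cos\phi\bigr)/\bigl(2\cos\phi-3i\sin\phi\bigr)$. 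Your assertion that one may ``pick $\tau$ so that $\pm\lambda=e^{i\tau}$ is an eigenvalue of $G^{(\phi)}$'' is unproved and in fact false for generic $\phi$: at $\phi=\pi/4$ this $\lambda\approx-0.381-0.925i$, while the spectrum of $G^{(\pi/4)}$ is $\{e^{-i\pi/4},-e^{i\pi/4},-e^{-i\pi/4}\}$, and neither $\lambda$ nor $-\lambda$ belongs to it. Consequently the step producing the phase $e^{2i(\phi+\tau)x}$ --- the only place $\tau$ enters --- does not go through as written; what your construction actually yields is the cross-term phase $e^{2i\phi}\lambda^{2}$ with $\lambda=-C/a_{13}$, i.e.\ $\tau$ must be taken with $e^{i\tau}=-C/a_{13}$ (exactly as in Model II, where the paper writes this reduced-matrix eigenvalue out explicitly as $e^{i\xi}=(1-5e^{2i\gamma})/(5-e^{2i\gamma})$). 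You need to compute this $\lambda$ explicitly and confront the discrepancy with the theorem's description of $\tau$, not bridge it by an unverified coincidence.

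Second, the defect matching, which you yourself flag as ``the main obstacle,'' is precisely the content of the theorem beyond Proposition~\ref{prop-3-1}, and you leave it undone. The paper's proof consists of recording the outcome of that computation: the entries of $\Psi$ at $x=0,\pm1$ acquire factors $\eta^{\pm1}$ with $\eta=e^{2\pi i\theta}$, and these produce $\Delta(x)$. In your setup the defect phase enters the three scalar equations whose right-hand sides contain $\Psi(0)$, and together with the two equations coupling $\Psi(0)$ to $\Psi(\pm1)$ you get five conditions on the seven free scalars ($\Psi(0)\in\mathbb{C}^3$ and $\varphi_1^{\pm},\varphi_3^{\pm}$), so a two-parameter family is plausible by counting; but until you actually solve this system you have established neither that an eigenvector exists for every pair $(\varphi_1,\varphi_3)$, nor that the half-line amplitudes can be taken of equal modulus with pure phase offsets (an assumption you impose, not derive), nor the exact $\Delta(\pm1)=e^{\mp2\pi\theta i}$ factors in (\ref{eq-3-3}). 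As it stands the proposal reproduces the easy bulk computation and defers the crux.
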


\begin{proof}
For $x\in \mathbb{Z}$, in the same way as (\ref{eq-2-2}), we have a decomposition $U_x^{(\phi)} = P_x + R_x + Q_x$.
Thus the eigenvalue problem $\mathcal{U}_{\mathfrak{U}^{(\phi)}}\Psi = \lambda\Psi$ is equivalent to
finding $\lambda\in \mathbb{C}$ and $\Psi\in l^2(\mathbb{Z},\mathbb{C}^3)$ with $\Psi\ne 0$ such that
\begin{equation*}
  \lambda \Psi(x)=P_{x-1} \Psi(x-1)+R_{x} \Psi(x)+ Q_{x+1} \Psi(x+1),\quad x\in \mathbb{Z}.
\end{equation*}
Then, by using the method of reduced matrix introduced in \cite{kkk} as well as Lemma~\ref{lem-2-2}, we can get an eigenvector $\Psi$ of $\mathcal{U}_{\mathfrak{U}^{(\phi)}}$,
which has a representation of the following form:
\begin{enumerate}
  \item[(1)] for $x\in \mathbb{Z}\setminus \{-1,0,1\}$, we have $\lambda=e^{i \tau}$, $\tilde{a}_{1}=\tilde{a}_{2}=-e^{-i \phi}$ and
$$
\Psi(x)=\left(\begin{array}{c}
\left(-e^{i \phi} e^{i\tau}\right)^{x}\varphi_{1} \\
-\left(1-\frac{3}{2} \tan \phi \cdot i \right)\left[\left(-e^{i \phi} e^{i \tau}\right)^{x} \varphi_{1}+\left(-e^{-i \phi} e^{-i \tau}\right)^{x} \varphi_{3}\right] \\
\left(-e^{-i \phi} e^{-i \tau}\right)^{x} \varphi_{3}
\end{array}\right);
$$
  \item[(2)] for $x=1$, we have $\lambda=e^{i \tau}$, $\tilde{a}_{1}=-e^{-i \phi}$, $\widetilde{a}_{2}=-\eta e^{-i \phi}$ with $\eta=e^{i2\pi \theta}$ and
$$
\Psi(x)=\left(\begin{array}{c}
\left(-e^{i \phi} e^{i \tau}\right) \varphi_{1} \\
-\left(1-\frac{3}{2} \tan \phi \cdot i\right)\left\{\left(-e^{i \phi} e^{i \tau}\right) \varphi_{1}+\left(-\eta e^{-i \phi} e^{-i \tau}\right) \varphi_{3}\right\} \\
\left(-\eta e^{-i \phi} e^{-i \tau}\right) \varphi_{3} .
\end{array}\right);
$$
  \item[(3)] for $x=-1$, we have $\lambda=e^{i \tau}$, $\tilde{a}_{1}=-\eta e^{-i \phi}$ with $\eta=e^{i2\pi \theta}$, $\widetilde{a}_{2}=-e^{-i \phi}$ and
$$
\Psi(x)=\left(\begin{array}{c}
\left(-\eta e^{-i \phi} e^{-i \tau}\right) \varphi_{1} \\
-\left(1-\frac{3}{2} \tan \phi \cdot i\right)\left\{\left(-\eta e^{-i \phi} e^{-i \tau}\right) \varphi_{1}+\left(- e^{i \phi} e^{i \tau}\right) \varphi_{3}\right\} \\
\left(- e^{i \phi} e^{i \tau}\right) \varphi_{3}
\end{array}\right);
$$
  \item[(4)] finally for $x=0$, we have $\lambda=\eta e^{i \tau}$ with $\eta=e^{i2\pi \theta}$, $\tilde{a}_{1}=\tilde{a}_{2}=-e^{-i \phi}$ and
$$
\Psi(x)=\left(\begin{array}{c}
\varphi_{1} \\
-\left(1-\frac{3}{2} \tan \phi \cdot i\right)\left(\varphi_{1}+\varphi_{3}\right) \\
\varphi_{3}
\end{array}\right).
$$
\end{enumerate}
Thus $\mu(x):= [\nu(\Psi)](x)$, $x\in \mathbb{Z}$ is a stationary measure of the walk $\mathcal{U}_{\mathfrak{U}^{(\phi)}}$. Careful calculation gives
formula (\ref{eq-3-3}).
\end{proof}

\subsection{Model II}

In this subsection we consider another model of space-inhomogeneous three-state quantum walk on the line.
Consider the unitary matrices $A_x^{(\gamma)} = e^{iv_x}A^{(\gamma)}$, $x\in \mathbb{Z}$, where $v_x$ is defined by
\begin{equation*}
  v_{x}=
\left\{
  \begin{array}{ll}
    2 \pi \theta, & \hbox{$x=0$;} \\
    0, & \hbox{$x\ne 0$, $x\in \mathbb{Z}$}
  \end{array}
\right.
\end{equation*}
with $\theta\in(0,1)$ and $A^{(\gamma)}$ is given by
\begin{equation}\label{eq}
  A^{(\gamma)}=\frac{1}{6}\left(\begin{array}{ccc}
-1-e^{2 i \gamma} & 2\left(1+e^{2 i \gamma}\right) & 5-e^{2 i \gamma} \\
2\left(1+e^{2 i \gamma}\right) & 2\left(1-2 e^{2 i \gamma}\right) & 2\left(1+e^{2 i \gamma}\right) \\
5-e^{2 i \gamma} & 2\left(1+e^{2 i \gamma}\right) & -1-e^{2 i \gamma}
\end{array}\right),
\end{equation}
where $\gamma\in[0,2\pi)$ is a parameter. Clearly, $A^{(0)} = G$, namely the Grover matrix $G$ is a special case of $ A^{(\gamma)}$.

According to Definitions~\ref{def-2-1} and \ref{def-2-2}, the family $\mathfrak{A}^{(\gamma)}:=\big\{A_x^{(\gamma)} = e^{iv_x}A^{(\gamma)}\mid x\in \mathbb{Z}\big\}$
of unitary matrices determines a space-inhomogeneous three-state quantum walk on $\mathbb{Z}$ and its evolution operator is the unitary operator
$\mathcal{U}_{\mathfrak{A}^{(\gamma)}}$ on $l^2(\mathbb{Z},\mathbb{C}^3)$ determined by the family $\mathfrak{U}^{(\phi)}$.
In what follows, we simply call this model the walk $\mathcal{U}_{\mathfrak{A}^{(\gamma)}}$.

Note that $A^{(\gamma)}$ is also a unitary matrix, hence it determines a space-homogeneous three-state quantum walk on $\mathbb{Z}$,
whose evolution operator is the unitary operator $\mathcal{U}_{A^{(\gamma)}}$ determined by $A^{(\gamma)}$.
Similarly, we call the space-homogeneous three-state quantum walk determined by $A^{(\gamma)}$ the walk $\mathcal{U}_{A^{(\gamma)}}$ below.

\begin{remark}\label{rem-3-2}
The walk $\mathcal{U}_{\mathfrak{A}^{(\gamma)}}$ includes those considered in \cite{wlw, ekk-2} as a special case.
\end{remark}

In 2017, Kawai et al. \cite{kkk} obtained stationary measures of the walk $\mathcal{U}_{A^{(\gamma)}}$, which is space-homogeneous as mentioned above.
In the following, we would like to find out stationary measures of the  space-inhomogeneous walk $\mathcal{U}_{\mathfrak{A}^{(\gamma)}}$.

\begin{theorem}
Consider the walk $\mathcal{U}_{\mathfrak{A}^{(\gamma)}}$. Let $\varphi_1$, $\varphi_3 \in \mathbb{C}$ be such that $|\varphi_1| + |\varphi_3| > 0$.
Then the walk $\mathcal{U}_{\mathfrak{A}^{(\gamma)}}$  has a stationary measure $\mu$ given by
\begin{equation}\label{eq-3-5}
\mu(x)=\left(2+\frac{9}{4} \tan ^{2} \gamma\right)\left(|\varphi_{1}|^{2}+|\varphi_{3}|^{2}\right)
+ \left(2+\frac{9}{2} \tan ^{2} \gamma\right)\Re \left(\Delta(x) e^{2 i\xi x} \varphi_{1} \bar{\varphi}_{3}\right),\quad x\in \mathbb{Z},
\end{equation}
where $\xi \in [0,2\pi)$ such that $e^{i \xi} = \frac{10-26 \cos (2 \gamma)-24 i \sin (2 \gamma)}{26-10 \cos (2 \gamma)}$, which is an eigenvalue of $A^{(\gamma)}$,
and $\Delta(\cdot)$ is the function on $\mathbb{Z}$ defined by $\Delta(-1)=e^{2\pi \theta i}$, $\Delta(1)=e^{-2\pi \theta i}$ and $\Delta(x)=1$ for $x\in \mathbb{Z}\setminus\{-1,1\}$.
\end{theorem}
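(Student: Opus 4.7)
The plan is to follow the template of Theorem~\ref{thr-3-2} for Model~I, replacing the generalized Grover matrix $G^{(\phi)}$ by the coin matrix $A^{(\gamma)}$. Since $A^{(\gamma)}$ enters the definition of the coin family $\mathfrak{A}^{(\gamma)}$ in exactly the same way that $G^{(\phi)}$ enters $\mathfrak{U}^{(\phi)}$, and since the conclusion \eqref{eq-3-5} has the same structural form as \eqref{eq-3-3}, I expect the same construction to go through: build an eigenvector $\Psi\in\mathscr{F}(\mathbb{Z},\mathbb{C}^3)$ satisfying $\mathcal{U}_{\mathfrak{A}^{(\gamma)}}\Psi=\lambda\Psi$ for a unimodular $\lambda$, and then invoke the observation from Section~\ref{sec-2} that $\nu(\Psi)$ is automatically a stationary measure. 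The stationary measure claimed in \eqref{eq-3-5} will then be $\mu(x)=\|\Psi(x)\|^2$.

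The first step is to decompose each coin as $A_x^{(\gamma)}=P_x+R_x+Q_x$ in the manner of \eqref{eq-2-2}, converting the eigenvalue problem $\mathcal{U}_{\mathfrak{A}^{(\gamma)}}\Psi=\lambda\Psi$ into the three-term recurrence
\[
  \lambda\Psi(x)=P_{x-1}\Psi(x-1)+R_x\Psi(x)+Q_{x+1}\Psi(x+1),\qquad x\in\mathbb{Z}.
\]
Away from the defect, i.e.\ for $|x|\geq 2$, all three coins in the recurrence equal $A^{(\gamma)}$, so I would invoke Lemma~\ref{lem-2-1} with $\lambda=e^{i\xi}$, after verifying the hypothesis $\lambda=-C/a_{13}=-D/a_{31}$ by a direct computation of the $2\times 2$ minors of $A^{(\gamma)}$; this is precisely where the explicit expression $e^{i\xi}=(10-26\cos(2\gamma)-24i\sin(2\gamma))/(26-10\cos(2\gamma))$ is forced, along with a common bulk value of $\tilde{a}_1=\tilde{a}_2$.

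Next I would handle the four regions $|x|\geq 2$, $x=1$, $x=-1$, and $x=0$ one at a time, in parallel with the four-way split in the proof of Theorem~\ref{thr-3-2}. At $x=\pm 1$, one of the neighbours in the recurrence is the defect $x=0$ whose coin carries the extra factor $\eta=e^{2\pi i\theta}$; this replaces one of $\tilde{a}_1,\tilde{a}_2$ by $\eta$ times its bulk value while the other is unchanged. At $x=0$ the recurrence is for the shifted eigenvalue $\eta e^{i\xi}$ and both $\tilde{a}_1,\tilde{a}_2$ revert to their bulk values. Assembling these piecewise data and plugging them into the formula of Lemma~\ref{lem-2-1} yields $\Psi(x)$ as a combination of $\varphi_1,\varphi_3$ with coefficients involving $e^{\pm i\xi x}$ and, at $x=\pm 1$, an extra $\eta^{\mp 1}$.

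Finally, taking $\mu(x)=\|\Psi(x)\|^2$ and expanding, the diagonal terms contribute a multiple of $|\varphi_1|^2+|\varphi_3|^2$ with coefficient $2+\tfrac{9}{4}\tan^2\gamma$ (the extra $\tfrac{9}{4}\tan^2\gamma$ coming from $|1-\tfrac{3}{2}\tan\gamma\cdot i|^2=1+\tfrac{9}{4}\tan^2\gamma$ in the middle component of $\Psi$), while the cross terms produce $(2+\tfrac{9}{2}\tan^2\gamma)\Re(\Delta(x)e^{2i\xi x}\varphi_1\bar\varphi_3)$, the phase factor $\Delta(x)$ encoding the $\eta^{\pm 1}$ modifications at $x=\pm 1$. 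The main obstacle is the purely algebraic verification that the stated $e^{i\xi}$ is indeed an eigenvalue of $A^{(\gamma)}$ of the Lemma~\ref{lem-2-1} type, together with the careful case-by-case tracking of $\tilde{a}_1,\tilde{a}_2$ at $x\in\{-1,0,1\}$; once these pieces are in hand, the remaining work is a routine expansion of squared norms.
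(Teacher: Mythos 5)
Your proposal is correct and follows essentially the same route as the paper: decompose the coins, solve the three-term recurrence piecewise on $\mathbb{Z}\setminus\{-1,0,1\}$, $x=\pm 1$, $x=0$ with the Lemma~\ref{lem-2-1}-type eigenvector (eigenvalue $e^{i\xi}=-C/a_{13}=-D/a_{31}$, bulk $\tilde a_1=\tilde a_2=-1$, replaced by $-\eta$ at the sites adjacent to the defect), and then expand $\|\Psi(x)\|^2$ to get the coefficients $2+\tfrac94\tan^2\gamma$ and $2+\tfrac92\tan^2\gamma$ exactly as you describe. The only cosmetic difference is that the paper's proof cites both Lemma~\ref{lem-2-1} and Lemma~\ref{lem-2-2} while actually using the Lemma~\ref{lem-2-1} form throughout, which your cleaner bookkeeping avoids.
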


\begin{proof}
For $x\in \mathbb{Z}$, in the same way as (\ref{eq-2-2}), we have a decomposition $A_x^{(\gamma)} = P_x + R_x + Q_x$.
Similar to that in the proof of Theorem~\ref{thr-3-2}, we need to find $\lambda\in \mathbb{C}$ and $\Psi\in l^2(\mathbb{Z},\mathbb{C}^3)$ with $\Psi\ne 0$ such that
\begin{equation*}
  \lambda \Psi(x)=P_{x-1} \Psi(x-1)+R_{x} \Psi(x)+ Q_{x+1} \Psi(x+1),\quad x\in \mathbb{Z}.
\end{equation*}
By using the method of reduced matrix introduced in \cite{kkk} as well as Lemma~\ref{lem-2-1}, we can such a function $\Psi$. The function $\Psi$
has a representation of the following form:
\begin{enumerate}
  \item[(1)] for $x \in \mathbb{Z}\setminus \{-1, 0, 1\}$, by Lemma~\ref{lem-2-2}, we have $\lambda=e^{i\xi}$, $\tilde{a}_{1}=\tilde{a}_{2}=-1$ and
$$
\Psi(x)=\left(\begin{array}{c}
-e^{i \xi x} \varphi_{1} \\
-\left(1-\frac{3}{2} \tan \gamma \cdot i\right)\left[-e^{i \xi x} \varphi_{1}-e^{-i \xi x} \varphi_{3}\right] \\
-e^{-i \xi x} \varphi_{3}
\end{array}\right);
$$
  \item[(2)] for $x=1$, we have $\lambda=e^{i\xi}$, $\tilde{a}_{1}=-1$, $\tilde{a}_{2}=-\eta$ with $\eta=e^{2\pi\theta i}$ and
$$
\Psi(x)=\left(\begin{array}{c}
-e^{i \xi } \varphi_{1} \\
-\left(1-\frac{3}{2} \tan \gamma \cdot i\right)\left[-e^{i \xi } \varphi_{1}-\eta e^{-i \xi } \varphi_{3}\right] \\
-\eta e^{-i \xi } \varphi_{3}
\end{array}\right);
$$
  \item[(3)] for $x=-1$, we have $\lambda=e^{i\xi}$, $\tilde{a}_{1}=-\eta$ with $\eta=e^{2\pi\theta i}$, $\tilde{a}_{2}=-1$ and
$$
\Psi(x)=\left(\begin{array}{c}
-\eta e^{-i \xi } \varphi_{1} \\
-\left(1-\frac{3}{2} \tan \gamma \cdot i\right)\left[-\eta e^{-i \xi } \varphi_{1}- e^{i \xi } \varphi_{3}\right] \\
-e^{ i \xi } \varphi_{3}
\end{array}\right),
$$
  \item[(4)] for $x=0$, we have $\lambda=\eta e^{i \xi}$ with $\eta=e^{2\pi\theta i}$, $\tilde{a}_{1}=\tilde{a}_{2}=-1$ and
$$
\Psi(x)=\left(\begin{array}{c}
\varphi_{1} \\
-\left(1-\frac{3}{2} \tan \gamma \cdot i\right)\left(\varphi_{1}+\varphi_{3}\right) \\
\varphi_{3}
\end{array}\right).
$$
\end{enumerate}
Thus the function $\mu(x):= [\nu(\Psi)](x)$, $x\in \mathbb{Z}$ is a stationary measure of the walk $\mathcal{U}_{\mathfrak{A}^{(\gamma)}}$.
Careful calculations then yield (\ref{eq-3-5}).
\end{proof}

\section{Conclusion remarks}\label{sec-4}

Although the method of reduced matrix \cite{kkk} was originally developed for dealing with space-homogeneous three-state quantum walks (QWs) on the line,
our work in this paper shows that this method can also play a role in dealing with some space-inhomogeneous three-state QWs.
Recently, Endo et al. \cite{ekkt} have developed a method, known as the method of transfer matrix, which can deal with some more general
space-inhomogeneous three-state QWs. However, to deal with a general space-inhomogeneous three-state QW, a more powerful method is needed.
It is still challenging to make clear the whole picture of the set of stationary measures of a general space-inhomogeneous three-state QW.

\section*{Acknowledgement}

This work is supported by National Natural Science Foundation of China (Grant No. 12261080).


\begin{thebibliography}{99}

\bibitem{bose} S. Bose, Quantum communication through an unmodulated spin chain,  Phys. Rev. Lett. 91 (2003), no. 20, 207901.

\bibitem{ekk-1} T. Endo, H. Kawai and N. Konno, Stationary measures for the three-state Grover walk with one defect in one dimension, arXiv: 1608.07402 (2016).

\bibitem{ekkt} T. Endo, T. Komatsu, N. Konno and T. Terada, Stationary measure for three-state quantum walk, Quantum Inf. Comp. 19 (2019), 901-912.

\bibitem{hgyc} Q. Han, T. Guo, S.D. Yin and Z.H. Chen, The stationary measure of a space-inhomogeous three-state quantum walk on the line,
Acta Math. Sci. Ser. A (Chinese Ed.) 39 (2019), no.1, 133-142.

\bibitem{kkk} H. Kawai, T. Komatsu and N. Konno, Stationary measures of three-state quantum walks on the one-dimensional lattice, Yokohama Math. J. 63 (2017), 59-74.

\bibitem{konno} N. Konno, The uniform measure for discrete-time quantum walks in one dimension, Quantum Inf. Process. 13 (2014), 1103-1125.

\bibitem{wlw} C.S. Wang, X.Y. Lu and W.L. Wang, The stationary measure of a space-inhomogeneous three-state quantum walk on the line,
Quantum Inf. Process. 14 (2015), 867-880.

\bibitem{cw} C. Wang, Abstract model of continuous-time quantum walk based on Bernoulli functionals and perfect state transfer,
Int. J. Quantum Inf. 21 (2023), No. 03, 2350015.

\bibitem{venegas} S. E. Venegas-Andraca, Quantum walks: a comprehensive review, Quantum Inf. Process. 11 (2012), 1015-1106.

\bibitem{zhan} H. Zhan, An infinite family of circulant graphs with perfect state transfer in discrete quantum walks,
Quantum Inf. Process. 18 (2019), no. 12, article number 369.
\end{thebibliography}
\end{document}